\documentclass[10pt,a4paper]{article}
\usepackage[T1]{fontenc}
\usepackage{lmodern}
\usepackage[margin=24mm]{geometry}
\usepackage{amsmath,amssymb,amsthm,mathtools}
\usepackage{graphicx,booktabs,microtype}
\usepackage{xcolor,xspace,stackengine}
\usepackage[colorlinks=true,linkcolor=blue,citecolor=blue,urlcolor=blue]{hyperref}
\hypersetup{pdftitle={Continuous-Time Critic as a Lyapunov Function},
 pdfauthor={Pavel Osinenko},pdfsubject={Projected adaptation and cart-pole stabilization}}
\newcommand{\Diff}{\ensuremath{\mathcal{D}}}							
\newcommand{\eps}{{\varepsilon}}										
\newcommand{\clip}{{\text{clip}}}										
\newcommand{\diam}{{\text{diam}}}										
\DeclareMathOperator*{\argmin}{\text{arg\,min}}							
\newcommand{\nrm}[1]{\left\lVert#1\right\rVert}							
\newcommand{\diag}[1]{{\text{diag}}\left(#1\right)}						
\newcommand{\cl}[1]{\text{cl}\left({#1}\right)}							
\newcommand{\low}{{\text{low}}}											
\newcommand{\up}{{\text{up}}}											
\DeclareMathAlphabet{\mathbbold}{U}{dsrom}{m}{n}						
\DeclareMathAlphabet{\mathbboldzero}{U}{bbold}{m}{n}					

\newcommand{\R}{{\mathbb{R}}}											
\newcommand{\states}{\mathbb S}											
\newcommand{\actions}{\mathbb A}										
\newcommand{\policy}{\pi}												
\newcommand{\K}{\ensuremath{\mathcal{K}}\xspace}						

\renewcommand{\states}{\mathbb X}
\renewcommand{\actions}{\mathbb U}
\newcommand{\Th}{\Theta}
\newcommand{\norm}[1]{\nrm{#1}}
\newtheorem{theorem}{Theorem}
\newtheorem{proposition}[theorem]{Proposition}
\newtheorem{lemma}[theorem]{Lemma}
\theoremstyle{remark}
\newtheorem{remark}[theorem]{Remark}
\newcommand{\CalfCost}{1208.8}
\newcommand{\MpcCost}{419.83}
\newcommand{\CalfSettling}{8.406}
\newcommand{\MpcSettling}{3.8375}
\newcommand{\CalfTravel}{5.9557}
\newcommand{\MpcTravel}{1.794}
\newcommand{\BoxVariation}{0.084829}
\newcommand{\NearVariation}{0.022127}
\newcommand{\CpuRatio}{1.6703}

\title{Continuous-Time Critic as a Lyapunov Function:\\
Projected Adaptation and Cart-Pole Stabilization}
\author{Pavel Osinenko\thanks{Correspondence: \href{mailto:p.osinenko@gmail.com}{p.osinenko@gmail.com}.}\\[4pt]
\small Central University, 7 Gasheka St., bld.~1, Moscow, 123056, Russia\\
\small Center for Engineering Systems and Sciences,\\
\small 30 Bolshoy Boulevard, bld.~1, Moscow, 121205, Russia\\
\small Sirius University of Science and Technology,\\
\small 1 Olimpiyskiy Avenue, Sirius, 354340, Russia}
\date{October 6, 2026}
\begin{document}
\maketitle
\begin{abstract}
We study adaptive stabilization of known control-affine systems by treating the current parameterized critic itself as a Lyapunov candidate.
Its total derivative contains a parameter-velocity term that can compensate unfavorable state evolution.
A projection onto the intersection of a bounded admissible velocity set and one decay halfspace enforces parameter invariance and Lyapunov decrease simultaneously.
We give an exact pointwise feasibility test, a conditional state-stability theorem, and a counterexample showing why a positive critic family containing a stabilizing member does not imply persistent feasibility.
A convex endpoint constraint provides a distinct sampled implementation.
A generic periodic polynomial-factor critic with a local Riccati anchor is evaluated on a force-limited cart-pole without mechanical-energy features, replay training, waypoints, or controller switching.
The frozen controller passes stabilization and numerical decay checks on 600 sampled initial states across two draw sets; 400 matched continuous trajectories also pass.
Independent reintegration audits and a conservative local feasibility certificate are distinguished from a global or machine-arithmetic guarantee.
Compared with nonlinear model-predictive control, the measured implementation uses less CPU but has higher cost, settling time, and cart travel.
The results support a conditional adaptive stabilization mechanism.
\end{abstract}

\section{Introduction}
\label{sec_introduction}
Control-Lyapunov functions (CLFs) convert the stabilization problem into feedback selection once an appropriate state function is available.
The classical universal construction of Sontag~\cite{Sontag1989UniversalConstr} makes this dependence explicit.
The difficult part is often finding a suitable CLF, particularly when a locally stabilizing quadratic function does not describe the desired large excursion.
Adaptive CLFs~\cite{Krstic1995ControlLyapunov} address nonlinear stabilization with unknown plant parameters through a different structured design problem.

The critic-as-Lyapunov-function (CALF) viewpoint uses a learned or adjusted critic as the stability object.
Earlier work develops reinforcement-learning schemes with Lyapunov constraints~\cite{Osinenko2020ReinforcementL}, an actor-critic framework with environment stability guarantees~\cite{Osinenko2023ActorCriticFram}, and a model-free CALF agent incorporating a nominal stabilizer~\cite{Osinenko2024CriticAsLyapuno}.
The present construction is explicitly \emph{model-based}: the known vector fields enter the critic derivative, and the sample-and-hold implementation predicts the next state.
The online parameters describe a critic and not an estimate of unknown masses or other plant coefficients.
No optimal value representation, Bellman equality, or ideal parameter vector is assumed here.
No nominal stabilizers are assumed known either.

The key observation is simple but operationally important.
For a smooth critic $V(x,\theta)$, the term $\nabla_\theta V^\top\dot\theta$ appears in its total derivative.
One can therefore adapt the critic to meet a decay inequality while the state evolves.
A projection that only preserves parameter bounds is insufficient: the same projection must also respect this particular derivative direction.
This yields a general-purpose formula with a state-dependent feasibility condition, rather than an unconditional stabilizer for every stabilizable system.

The contributions are as follows: (i) a bounded destination-projection formulation with an explicit feasibility budget and conditional state-stability proof; (ii) an exact-flow sampled counterpart whose endpoint optimization is convex for a positive factorized critic; and (iii) a reproducible cart-pole study separating numerical success, local analytic guarantees, and failure modes.
The chain rule, convex projection, and Lyapunov argument are established tools; the contribution claimed here is their particular continuous-time CALF construction and its implementation and assessment rather than a new general Lyapunov theorem.
It should be stressed that persistent feasibility remains a substantive condition for theoretical guarantees.

\subsection{Related work and positioning}
\label{sub_related_work_and_positioning}
Continuous-time online critic adaptation has a substantial history in adaptive dynamic programming.
Vamvoudakis and Lewis~\cite{Vamvoudakis2010OnlineActorCrit} develop simultaneous actor and critic updates to approximate the infinite-horizon Hamilton--Jacobi--Bellman equation, with excitation conditions for critic convergence.
Dierks and Jagannathan~\cite{Dierks2010OptimalControlO} use a single online approximator for the value function and associated control, establishing uniform ultimate boundedness in the presence of approximation errors.
These results establish precedents for continuous learning during control and for obtaining feedback from a critic.
Our stabilization condition is imposed directly on the current critic's total derivative, without requiring convergence to an optimal value representation.

Parameter-dependent Lyapunov functions and adaptation laws are also established in adaptive control~\cite{Krstic1995ControlLyapunov}.
Lopez and Slotine~\cite{Lopez2022UniversalAdapti} explicitly account for the parameter-derivative term and dynamically scale adaptation to handle estimation transients, starting from a family of CLFs for uncertain plant models.
More directly, Sinaei and Ebeigbe~\cite{Sinaei2026SafeAdaptiveCon} modify a parameter adaptation law through a minimally invasive real-time QP with robust adaptive CLF and barrier constraints for unknown dynamics.
Thus filtering an adaptation law through Lyapunov constraints already has published precedents.
In the present construction, the parameters are freely chosen critic coefficients for a known plant, and a frozen critic need only satisfy uniform positivity; its decrease is enforced using bounded parameter motion whenever the joint constraints are feasible.

Lyapunov-constrained reinforcement learning addresses several different stability or safety notions.
Chow et al.~\cite{Chow2018ALyapunovBasedA} construct Lyapunov constraints for policies in constrained Markov decision processes with expected cumulative-cost limits.
Han et al.~\cite{Han2020ActorCriticRein} learn a Lyapunov critic and actor using a data-based stability condition for stochastic systems.
Khaniki et al.~\cite{Khaniki2026SafeAndAdaptive} project finite policy-parameter updates using a linearized expected Lyapunov decrement, while updating the Lyapunov-network parameters separately by gradient descent.
Here the hard continuous-time constraint acts on the critic's own parameter velocity through the exact chain-rule term $\nabla_\theta V^\top\dot\theta$, evaluated with the actually applied input.

A complementary line jointly synthesizes feedback laws and Lyapunov certificates before deploying the resulting controller.
Neural Lyapunov Control~\cite{Chang2019NeuralLyapunovC} combines learning with counterexample search and formal verification, while Neural Lyapunov Redesign~\cite{Mehrjou2021NeuralLyapunovR} alternates certificate estimation and policy improvement to enlarge the region of attraction.
Recent state and output feedback synthesis~\cite{Yang2024LyapunovStableN} and continuous-time Zubov-based training~\cite{Li2025TwoStageLearnin} use scalable neural-network verification to certify regions for the synthesized controllers.
Our critic continues to evolve along the physical trajectory; the reported large-excursion cart-pole evidence is numerical, supplemented by a separate local certificate.
The present study therefore supplies a different implementation and guarantee scope from formally verified controller synthesis.

Finally, combining Lyapunov decrease and set invariance in a QP is standard CLF/CBF methodology~\cite{Ames2017ControlBarrierF}.
For the augmented state $(x,\theta)$, one can interpret $v=\dot\theta$ as a virtual control input and $\{0\}\times\Th$ as the target set.
This interpretation identifies the established optimization machinery underlying our construction.
The specific contribution is the bounded destination constraint on critic motion, its exact pointwise compensation budget and persistent-feasibility boundary, together with the convex sampled endpoint formulation and reproducible case study.

\section{Projected Critic Adaptation}
\label{sec_projected_critic_adaptation}
\subsection{Plant and critic}
\label{sub_plant_and_critic}
Consider the known plant
\begin{equation}
	\label{eqn_plant}
 \dot x=f(x)+g(x)u,\qquad x\in\states\subseteq\R^n,\quad u\in\actions,
\end{equation}
where $\states$ is an open neighborhood of the target $0$, $f,g$ are locally Lipschitz, and $\actions$ is an admissible input set.
Let $\Th\subset\R^N$ be compact and convex, and $V$ be $\mathcal C^1$ on a neighborhood of $\states\times\Th$.
Assume uniform positive-definite bounds on the domain of interest:
\begin{equation}
	\label{eqn_bounds}
 \kappa_\low(\norm x)\le V(x,\theta)\le\kappa_\up(\norm x),
 \qquad \theta\in\Th.
\end{equation}
Here $\kappa_\low$ and $\kappa_\up$ belong to class $\K$: they are continuous, strictly increasing functions that vanish at zero.
These bounds measure state size uniformly over the critic parameters.
To ensure that solutions remain in a region where the assumptions hold, the stability results below additionally require a compact sublevel set contained in $\states\times\Th$.
For periodic states the bounds are understood relative to distance to the target on the state space; a wrapped angle is not a global smooth coordinate.

Choose a continuous positive-definite decay target $w$ and an input law (in other words, policy) $\policy(x,\theta)\in\actions$.
A possible choice is
\begin{equation}
	\label{eqn_greedy}
 \policy(x,\theta)\coloneqq\argmin_{u\in\actions}
  \{\nabla_xV(x,\theta)^\top g(x)u+u^\top R(x,\theta)u\},\quad R\succ0.
\end{equation}
For a scalar interval $\actions=[-\bar u,\bar u]$, this is the saturated expression $\policy=\clip_{[-\bar u,\bar u]}(-L_gV/(2R))$, where $L_gV\coloneqq\nabla_xV^\top g$ and $\clip_{[a,b]}(z)\coloneqq\min\{b,\max\{a,z\}\}$.
Other admissible laws can be used; the derivative constraint must always use the \emph{actually applied} input.

Define, with column gradients,
\begin{equation}
	\label{eqn_pa}
 p\coloneqq\nabla_\theta V,\qquad
 a\coloneqq\nabla_x V^\top(f+g\policy)+w(x).
\end{equation}
For any absolutely continuous parameter trajectory with velocity $v$,
\begin{equation}
	\label{eqn_identity}
 \dot V=\nabla_xV^\top(f+g\policy)+p^\top v,
 \qquad \dot V\le-w(x)\ \Longleftrightarrow\ p^\top v+a\le0.
\end{equation}
In the linear critic case $V=\theta^\top\varphi(x)$, $p=\varphi(x)$.
For $\policy=-g^\top\Diff\varphi^\top\theta$, the state contribution is $\theta^\top\Diff\varphi f-\theta^\top\Diff\varphi gg^\top\Diff\varphi^\top\theta$, where $\Diff\varphi$ denotes the Jacobian of $\varphi$ with respect to $x$.
Notice that there is no unknown plant parameter in this calculation: $\theta$ is the chosen critic parameter, and its velocity contributes the directly adjustable term $p^\top v$.

\subsection{One projection for two requirements}
\label{sub_one_projection_for_two_requirements}
Let $v_0(x,\theta)$ be a preferred adaptation velocity and $\tau>0$ a design time constant.
Let $M$ be a symmetric positive-definite metric matrix, defining $\norm z_M^2\coloneqq z^\top Mz$.
Choose the closest admissible velocity to $v_0$ in this weighted norm:
\begin{equation}
	\label{eqn_projection}
 \boxed{\begin{alignedat}{2}
 v^*(x,\theta)\coloneqq{}&\argmin_v\quad&&\tfrac12(v-v_0)^\top M(v-v_0)\\
 &\text{subject to}\quad&&\theta+\tau v\in\Th,\quad p^\top v+a\le0,\\
 \dot\theta={}&v^*(x,\theta).
 \end{alignedat}}
\end{equation}
The first constraint requires the destination $\theta+\tau v$ to lie in $\Th$.
Here $\tau$ limits the adaptation speed; it need not equal a numerical integration step.
By convexity, the entire segment from $\theta$ to this destination remains in $\Th$.
The resulting velocities belong to the tangent cone $\mathcal T_\Th(\theta)\coloneqq\cl{\{c(y-\theta):c\ge0,\ y\in\Th\}}$, where $\cl{\cdot}$ denotes closure.
Unlike the whole tangent cone, the admissible velocity set is bounded: $\norm v\le\diam(\Th)/\tau$, where $\diam(\Th)\coloneqq\sup_{y,z\in\Th}\norm{y-z}$.
When feasible, the compact convex constraint set and strictly convex objective give a unique optimizer.

Equivalently, define the admissible destination set
\begin{equation}
 \mathcal K_\tau(x,\theta)\coloneqq\Th\cap
 \{y:p^\top y\le p^\top\theta-\tau a\}.
\end{equation}
Its weighted projection $y^*\coloneqq\argmin_{y\in\mathcal K_\tau}\norm{y-(\theta+\tau v_0)}_M^2$ is the closest admissible destination in the norm defined by $M$.
Substitution gives $v^*=(y^*-\theta)/\tau$.
Along a solution for which this selection is defined,
\begin{equation}
	\label{eqn_invariance}
 \theta(t)=e^{-t/\tau}\theta(0)+\frac1\tau
       \int_0^t e^{-(t-s)/\tau}y^*(s)\,ds\in\Th.
\end{equation}
The inclusion follows from closed convexity: the nonnegative weights sum to one.
Parameter invariance and critic decrease thus arise from separate constraints of the same optimization.

\begin{proposition}[Pointwise feasibility]
	\label{prp_feasibility}
The constraint set in~\eqref{eqn_projection} is nonempty precisely when
\begin{equation}
	\label{eqn_budget}
 a\le\frac{p^\top\theta-\min_{y\in\Th}p^\top y}{\tau}.
\end{equation}
\end{proposition}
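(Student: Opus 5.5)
The plan is to pass to the destination variable $y=\theta+\tau v$, which turns the constraint set into the destination set $\mathcal K_\tau(x,\theta)$ introduced above. The map $v\mapsto\theta+\tau v$ is an affine bijection because $\tau>0$. Under this map, the constraint $\theta+\tau v\in\Th$ becomes $y\in\Th$. Substituting $v=(y-\theta)/\tau$ into $p^\top v+a\le0$ and multiplying by $\tau>0$ gives $p^\top y\le p^\top\theta-\tau a$. Hence the constraint set in~\eqref{eqn_projection} is nonempty if and only if
\[
\mathcal K_\tau(x,\theta)=\Th\cap\{y:p^\top y\le p^\top\theta-\tau a\}
\]
is nonempty. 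This reduces the statement to deciding when a compact convex set meets a closed halfspace.

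For that question I will use compactness of $\Th$. The linear functional $y\mapsto p^\top y$ is continuous, so it attains its minimum $m\coloneqq\min_{y\in\Th}p^\top y$ at some $y_{\min}\in\Th$.

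\emph{Sufficiency.} Suppose $\mathcal K_\tau$ is nonempty. Any $y$ in it satisfies $m\le p^\top y\le p^\top\theta-\tau a$. Rearranging and dividing by $\tau$ gives~\eqref{eqn_budget}.

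\emph{Necessity.} Conversely, if~\eqref{eqn_budget} holds, then $p^\top y_{\min}=m\le p^\top\theta-\tau a$. So $y_{\min}\in\mathcal K_\tau$, and $v=(y_{\min}-\theta)/\tau$ is a feasible velocity.

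I do not expect a genuine obstacle here. The only point needing care is that the minimum in~\eqref{eqn_budget} is attained, and this is exactly where compactness (or at least closedness together with boundedness below of $p^\top y$) of $\Th$ enters. The degenerate case $p=0$ is consistent with the criterion. There the right-hand side of~\eqref{eqn_budget} is $0$ and the halfspace condition reads $0\le-\tau a$, so both sides reduce to $a\le0$.

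I would also remark that the right-hand side of~\eqref{eqn_budget} is nonnegative, since $\theta\in\Th$. It is the maximal decrease of $p^\top$ achievable over the horizon $\tau$, which is what the paper calls the compensation budget.
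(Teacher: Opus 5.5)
Your proof is correct and matches the paper's own argument: you substitute $v=(y-\theta)/\tau$ to pass to the destination set $\mathcal K_\tau$, then use compactness of $\Th$ so that $\min_{y\in\Th}p^\top y$ is attained, which yields both directions. The only blemish is cosmetic: your labels ``Sufficiency'' and ``Necessity'' are swapped relative to the statement, since your first part shows the budget condition is necessary for feasibility and your second part shows it is sufficient.
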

\begin{proof}
Substitute $v=(y-\theta)/\tau$ into the decay constraint and minimize its left side over the compact set $\Th$.
The minimum is attained.
\end{proof}
The right-hand side of~\eqref{eqn_budget} is the largest reduction $-p^\top v$ that an admissible parameter velocity can contribute at the current $(x,\theta)$.
It is nonnegative because $y=\theta$ is admissible, but it can be zero at a parameter boundary or when $p=0$.
At a fixed $(x,\theta)$, reducing $\tau$ allows larger parameter velocities and cannot reduce this available compensation.
Along a closed-loop trajectory, however, the faster adaptation also changes future states and parameters; it therefore need not prevent later infeasibility.

For a box $\Th=[\theta_-,\theta_+]$, put $l=(\theta_--\theta)/\tau$ and $r=(\theta_+-\theta)/\tau$.
The exact test is
\begin{equation}
	\label{eqn_box}
 a+\sum_{p_i>0}p_i l_i+\sum_{p_i<0}p_i r_i\le0.
\end{equation}
For diagonal $M=\diag{m_i}$, with $m_i>0$, introduce the nonnegative Lagrange multiplier $\lambda$ of the decay constraint $p^\top v+a\le0$.
Minimizing the Lagrangian over the component bounds gives
\begin{equation}
 v_i(\lambda)\coloneqq\clip_{[l_i,r_i]}(v_{0i}-\lambda p_i/m_i),\qquad\lambda\ge0.
\end{equation}
Define the scalar residual $F(\lambda)\coloneqq a+\sum_i p_i v_i(\lambda)$.
If $F(0)\le0$, the box projection $v(0)$ is already the optimizer.
Otherwise the decay constraint is active, and its multiplier satisfies $F(\lambda)=0$.
Each term $p_i v_i(\lambda)$ is continuous and nonincreasing, so the same holds for $F$.
After checking~\eqref{eqn_box}, increase an upper trial value of $\lambda$ until $F\le0$, then use bisection between zero and that value.
This summarizes the procedure.

\subsection{Why feasibility is not automatic}
\label{sub_why_feasibility_is_not_automatic}
Consider $\dot x=x+u$, $V(x,\theta)=\theta x^2$, $\Th=[0.1,1]$, and $u=-\partial_xV=-2\theta x$.
The interior parameter $\theta=0.75$ stabilizes the plant with $\dot V=-0.75x^2$.
Nevertheless, at $\theta=0.1$, the derivative is $0.16x^2+x^2v$.
Every invariant-box velocity satisfies $v\ge0$ there, making strict decay impossible for $x\ne0$.
Thus even a uniformly positive family containing a stabilizing member need not remain feasible from all admissible initial parameters.
Generally speaking, feasibility can be achieved for suitably chosen critic models.

Sequential projection also fails in general.
At a corner of $\R_+^2$, the decay halfspace $v_1+v_2\le1$ contains $(-1,2)$, but its projection onto the positive orthant is $(0,2)$, outside the halfspace.
This explains why an invariance projection cannot simply be appended to an otherwise successful adaptation formula.

\section{Conditional Stability and Sampled Implementation}
\label{sec_conditional_stability_and_sampled_implementation}
\begin{theorem}[State stability under feasible adaptation]
	\label{thm_ct}
Suppose~\eqref{eqn_bounds} holds, $w$ is continuous positive definite, and $f+g\policy$ is continuous, bounded on compact subsets, and zero at $x=0$.
Let $\Omega_b := \{(x,\theta)\in\states\times\Th:V(x,\theta)\le b\}$ be compact with state projection inside $\states$.
Suppose the adaptation law has absolutely continuous local solutions which admit continuation while remaining in compact subsets, and its constraints remain feasible on $\Omega_b$.
Then every such solution starting in $\Omega_b$ is forward complete, stays there, and has $x(t)\to0$.
The state target is Lyapunov stable uniformly over initial parameters for sufficiently small initial states whose sublevels are covered by these assumptions.
\end{theorem}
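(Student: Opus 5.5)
The plan is a standard Lyapunov argument on the augmented state $(x,\theta)$, using the identity~\eqref{eqn_identity} together with the invariance formula~\eqref{eqn_invariance}.

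\emph{Invariance of $\Omega_b$.} Take an absolutely continuous local solution $(x(t),\theta(t))$ starting in $\Omega_b$. While it stays in $\Omega_b$, feasibility gives $\dot\theta=v^*$ with $p^\top v^*+a\le0$ for almost every $t$. By~\eqref{eqn_identity}, $\frac{d}{dt}V(x(t),\theta(t))\le-w(x(t))\le0$ almost everywhere. Since $t\mapsto V(x(t),\theta(t))$ is absolutely continuous, as a $\mathcal C^1$ function composed with an absolutely continuous curve, it is nonincreasing, so $V\le b$ persists. Moreover $\theta(t)\in\Th$ by~\eqref{eqn_invariance}.

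To rule out leaving $\Omega_b$ through the state coordinate, I use a first-exit-time argument. Let $t^*$ be the supremum of times for which the solution stays in $\Omega_b$. By continuity and closedness of $\Omega_b$, the point at $t^*$ lies in $\Omega_b$. The state projection of $\Omega_b$ is a compact subset of the open set $\states$, and $V$ is $\mathcal C^1$ on a neighborhood of $\states\times\Th$. Hence nearby points with $\theta\in\Th$ and $V\le b$ stay in $\Omega_b$. The decrease inequality holds on $[0,t^*]$ and, by continuity of the integrand, for a short time after, which contradicts maximality unless $t^*=\infty$. The continuation hypothesis then gives forward completeness, because the solution remains in the compact set $\Omega_b$.

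\emph{Convergence.} Integrating the decrease gives $\int_0^\infty w(x(s))\,ds\le V(x(0),\theta(0))\le b$. The velocity $\dot x=f+g\policy$ is bounded on $\Omega_b$, so $x$ is uniformly continuous. Also $w$ is continuous, hence uniformly continuous on the compact state projection, so $w(x(t))$ is uniformly continuous in $t$. Barbalat's lemma then gives $w(x(t))\to0$.

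To conclude $x(t)\to0$, suppose instead $\norm{x(t_k)}\ge\eps$ along a sequence $t_k\to\infty$. Then $w(x(t_k))\ge\min\{w(z):\eps\le\norm z,\ z\in\text{proj}\,\Omega_b\}>0$. This minimum is positive because it is taken over a compact set excluding $0$, and $w$ is positive definite. That contradicts $w(x(t))\to0$.

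\emph{Uniform stability.} Given $\eps>0$, choose $c\le b$ with $c<\kappa_\low(\eps)$, and choose $\delta$ with $\kappa_\up(\delta)\le c$. If $\norm{x(0)}<\delta$, then $V(x(0),\theta(0))\le c$ for every $\theta(0)\in\Th$. By the monotonicity shown above, $\kappa_\low(\norm{x(t)})\le V\le c<\kappa_\low(\eps)$, so $\norm{x(t)}<\eps$. The choice of $\delta$ is independent of $\theta(0)$.

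The main obstacle is the first step: making the exit-time argument rigorous when $\dot\theta$ is only measurable, and checking that the set $\{(x,\theta):\theta\in\Th,\ V\le b\}$ is not left through a boundary of $\states$. The second point relies on the hypothesis that $\Omega_b$ is compact with state projection in the open set $\states$. I would handle this by working directly with the closed sublevel set and the monotone function $V(x(t),\theta(t))$, avoiding any topological claim about connected components.
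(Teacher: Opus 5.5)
Your proposal is correct and follows the paper's proof step for step: $\Th$-invariance from the convex-combination formula~\eqref{eqn_invariance}, the integrated decay inequality from~\eqref{eqn_identity}, continuation on the compact sublevel, Barbalat's argument for $w(x(t))\to0$, and the $\kappa_\low$/$\kappa_\up$ sandwich for stability uniform in $\theta(0)$. The one slip is the phrase ``by continuity of the integrand'' in your exit-time step: $\nabla_xV^\top(f+g\policy)+p^\top v^*$ is in general only measurable, since $v^*$ can be discontinuous (see the regularity remark). The step is unnecessary anyway, because any solution of the law satisfies $p^\top\dot\theta+a\le0$ almost everywhere on its whole existence interval, so $V$ is nonincreasing there and invariance of $\Omega_b$ follows directly.
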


\begin{proof}
Equation~\eqref{eqn_invariance} preserves the parameter set.
The chain rule and the joint decay constraint give
\begin{equation}
	\label{eqn_integral}
 V(x(t),\theta(t))+\int_0^t w(x(s))\,ds\le V(x(0),\theta(0)).
\end{equation}
The trajectory therefore stays in the compact sublevel and extends for all positive time by the continuation hypothesis.
The uniform bounds give stability: choose $\delta$ so that $\kappa_\up(\delta)<\kappa_\low(\eps)$ and the associated initial sublevel is covered.
Then $\norm{x(0)}<\delta$ implies $\norm{x(t)}<\eps$, independently of $\theta(0)$.
On the compact invariant set, the plant velocity is bounded, so $x(t)$ is Lipschitz in time.
Uniform continuity of $w$ there makes $w(x(t))$ uniformly continuous.
It is nonnegative and integrable by~\eqref{eqn_integral}, hence tends to zero.
Positive definiteness on the compact state projection then implies $x(t)\to0$.
\end{proof}

\begin{remark}[No parameter convergence]
	\label{rem_no_param_conv}
	The theorem does not assert parameter convergence to any particular vector.
\end{remark}

\begin{remark}[Regularity is a separate hypothesis]
Uniqueness of each convex optimizer does not imply continuity of the state-dependent optimizer map.
Projected dynamical systems require explicit solution and regularity conditions~\cite{Hauswirth2021ProjectedDynami}.
For example, with $\dot x=-x$, $V=ax^2$, decay target $x^2$, and nominal $\dot a=2(2-a)$, the constraint at $a=1$ is $x^2\dot a\le x^2$.
With bounds permitting both velocities, projection gives $\dot a=1$ for $x\ne0$ but $\dot a=2$ at $x=0$.
Smooth data and a unique optimizer do not remove this discontinuity.
This example diagnoses regularity, not instability.
\end{remark}

\subsection{A distinct endpoint-constrained realization}
\label{sub_a_distinct_endpoint_constrained_realization}
Choose a hold length $h\le\tau$ (in a sample-and-hold control regime) and let $y=\Phi_h(x,\policy(x,\theta))$ be the exact held-input plant trajectory endpoint starting from $x$ under constant control $\policy(x,\theta)$.
Set $\theta^+ :=\theta+hv$.
In addition to the box and instantaneous constraints, impose
\begin{equation}
	\label{eqn_endpoint}
 V(y,\theta+hv)-V(x,\theta)
 +\frac h2\{w(x)+w(y)\}+h\delta q(x)\le0,
 \qquad\delta\ge0,\quad q\ge0.
\end{equation}
If $V(y,\cdot)$ is convex, this is a convex constraint in $v$ for the \emph{fixed} selected input.
The sampled filter is not merely Euler integration of~\eqref{eqn_projection}; it solves an additional endpoint requirement.
The predicted flow is independent of the parameter destination because the physical input is held during the interval.

\begin{lemma}[Exact sampled decay]
	\label{lem_sampled}
For fixed $h>0$, suppose all exact held endpoints exist, all updates satisfy the admissible velocity set and~\eqref{eqn_endpoint}, and the critic has uniform bounds and a compact sublevel containing the initial pair.
Then parameters remain in $\Th$, sampling states remain in that sublevel, and $x_k\to0$.
The state is Lyapunov stable at sampling instants.
\end{lemma}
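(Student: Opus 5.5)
The plan is to mirror the proof of Theorem~\ref{thm_ct} in discrete time, with the endpoint inequality~\eqref{eqn_endpoint} playing the role of~\eqref{eqn_integral}.

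\textbf{Parameter invariance.} Since $h\le\tau$, we can write
\[
\theta^+=\theta+hv=(1-h/\tau)\theta+(h/\tau)(\theta+\tau v).
\]
This is a convex combination of $\theta\in\Th$ and the admissible destination $\theta+\tau v\in\Th$. Closed convexity of $\Th$ therefore gives $\theta_k\in\Th$ for all $k$, by induction.

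\textbf{One-step decrease.} Set $x_{k+1}=y=\Phi_h(x_k,\policy(x_k,\theta_k))$. This is the true held-input state, because the plant input is held constant on the interval and the exact endpoint exists by hypothesis. Dropping the nonnegative terms $w(x_k)/2$ and $\delta q$ from~\eqref{eqn_endpoint} yields
\[
V(x_{k+1},\theta_{k+1})\le V(x_k,\theta_k)-\tfrac h2\{w(x_k)+w(x_{k+1})\}\le V(x_k,\theta_k).
\]
So the sequence $V_k:=V(x_k,\theta_k)$ is nonincreasing. By induction, $(x_k,\theta_k)$ stays in the compact sublevel containing the initial pair. This requires the state projection of that sublevel to lie in $\states$, which is the standing assumption that the sublevel is inside the domain where~\eqref{eqn_bounds} holds. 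Telescoping and using $V\ge0$ gives
\[
\tfrac h2\sum_{k\ge0}w(x_k)\le V_0<\infty ,
\]
hence $w(x_k)\to0$.

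\textbf{Convergence to the target.} This is the one genuinely delicate step, because $w$ is only positive definite and continuous, not class-$\K$ bounded from below. I would argue by contradiction. Suppose $\norm{x_k}\ge\eps$ along a subsequence. The sublevel is compact, so its state projection $S$ is compact. The set $S\cap\{\norm x\ge\eps\}$ is then compact and excludes $0$, so $w$ attains a positive minimum on it. That contradicts $w(x_k)\to0$. Note that, unlike the continuous case, no uniform continuity of $w(x(t))$ is needed: summability directly gives decay of the sequence.

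\textbf{Lyapunov stability at sampling instants.} Given $\eps>0$, choose $\delta$ with $\kappa_\up(\delta)<\kappa_\low(\eps)$ and with the initial sublevel covered by the hypotheses. Then
\[
\kappa_\low(\norm{x_k})\le V_k\le V_0\le\kappa_\up(\norm{x_0})<\kappa_\low(\eps),
\]
so $\norm{x_k}<\eps$ uniformly in $\theta_0$. The main obstacle is bookkeeping rather than analysis. It consists of making sure that $x_{k+1}$ remains in $\states$, which is guaranteed by the compact sublevel hypothesis, and that feasibility at every step is assumed rather than derived, exactly as in Theorem~\ref{thm_ct}.
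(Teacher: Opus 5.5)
Your proposal is correct and follows the paper's proof essentially step for step. The shared steps are: a convex combination gives $\theta^+\in\Th$; telescoping~\eqref{eqn_endpoint} gives $\sum_k w(x_k)<\infty$; compactness of the sublevel's state projection turns $w(x_k)\to0$ into $x_k\to0$; and the same $\kappa_\low$/$\kappa_\up$ chain gives stability at sampling instants. Your positive-minimum formulation of the compactness step is equivalent to the paper's convergent-subsequence contradiction.

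Two cosmetic points. First, your displayed one-step inequality drops only $h\delta q(x_k)\ge0$; both $w$ terms are kept, so the phrase about dropping $w(x_k)/2$ is a slip. Second, the stability radius should not be named $\delta$, since that symbol already denotes the margin in~\eqref{eqn_endpoint}.
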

\begin{proof}
Since $\theta+\tau v\in\Th$ and $h\le\tau$, the destination $\theta^+=(1-h/\tau)\theta+(h/\tau)(\theta+\tau v)$ belongs to $\Th$.
Telescoping~\eqref{eqn_endpoint} gives $\frac h2\sum_{k=0}^{N-1}\{w(x_k)+w(x_{k+1})\}\le V(x_0,\theta_0)$.
For fixed $h>0$, the nonnegative series $\sum_k w(x_k)$ is finite, hence $w(x_k)\to0$.
If $x_k$ did not converge to zero, some subsequence would stay at least a fixed distance $\eps>0$ from zero.
Compactness of the state projection would give a further subsequence converging to $\bar x\ne0$.
Continuity would then imply $w(\bar x)=0$, contradicting positive definiteness.
For stability, fix $\eps>0$ and choose $\rho>0$ such that $\kappa_\up(\rho)<\kappa_\low(\eps)$ and the corresponding initial sublevel is covered by the assumptions.
Then $\norm{x_0}<\rho$ implies, for every $k$,
\begin{equation*}
 \kappa_\low(\norm{x_k})\le V(x_k,\theta_k)
 \le V(x_0,\theta_0)\le\kappa_\up(\norm{x_0})
 <\kappa_\low(\eps).
\end{equation*}
Strict monotonicity of $\kappa_\low$ gives $\norm{x_k}<\eps$, uniformly over the admissible initial parameters.
\end{proof}
Full intersample convergence additionally follows if the held-flow excursion over $[0,h]$ tends uniformly to zero as the initial state tends to zero, uniformly over admissible parameters.
The local cart-pole construction below has this property.
Numerically predicted endpoints do not satisfy the lemma automatically: rigorous prediction-error enclosures or exact flow would be needed for a machine-level claim.

\section{A Generic Cart-Pole Critic}
\label{sec_a_generic_cart_pole_critic}
\subsection{Physical model and periodic lift}
\label{sub_physical_model_and_periodic_lift}
Let $s=(x,v,\vartheta,\omega)\in\R^2\times S^1\times\R$, with upright $\vartheta=0$ and force $|u|\le12$ N.
A point-mass pole on a massless rod has dynamics
\begin{align}
	\label{eqn_cartpole}
 \dot x&=v,& \dot\vartheta&=\omega,\nonumber\\
 \dot v&=\frac{u+m\ell\sin\vartheta\,\omega^2-mg\sin\vartheta\cos\vartheta}
                  {M+m\sin^2\vartheta},&
 \dot\omega&=\frac{g\sin\vartheta-\cos\vartheta\,\dot v}{\ell}.
\end{align}
The nominal parameters are $M=1$ kg, $m=0.2$ kg, $\ell=0.8$ m, $g=9.81$ m/s$^2$.
There is no friction, rail-position constraint, actuator lag, slew-rate limit, measurement noise, or parameter mismatch in this study.
The uncontrolled upright equilibrium is unstable.

Use the generic lift and its degree-two monomials
\begin{equation}
 z=(x,v/2,2\sin\vartheta,2(1-\cos\vartheta),\omega)^\top,
 \qquad \psi=(z_i,\tfrac14z_i z_j:1\le i\le j\le5)^\top\in\R^{20}.
\end{equation}
Coordinates are numerically scaled in the stated SI conventions; their scales are design choices.
Define
\begin{align}
 H(s)&=1-\exp[-\tfrac12\{x^2+v^2+8(1-\cos\vartheta)+\omega^2/4\}],\nonumber\\
 \xi_\kappa(s)&=(x,v,\sin\vartheta+\kappa(1-\cos\vartheta),\omega)^\top,
 \qquad \gamma(s)=e^{-(1-\cos\vartheta)/b^2}.
\end{align}
Obtain $P\succ0$ from the upright continuous-time algebraic Riccati equation with $Q_0=\operatorname{diag}(2,1,70,8)$ and $R_0=0.35$.
The local checks use the stored matrix without assuming that it solves the equation exactly in machine arithmetic.
Define
\begin{equation}
	\label{eqn_critic}
 V(s,L,\eta)=H(s)\{\varepsilon\norm z^2+\norm{L\psi}^2\}
                  +We^\eta\gamma(s)\xi_\kappa^\top P\xi_\kappa.
\end{equation}
Here $L\in[-3,3]^{5\times20}$, $\eta\in[-7,0.7]$, $\varepsilon=0.002$, $W=8$, $b=1.5$, and $\kappa\in\{-0.25,0.25\}$ is fixed after initialization.
There are 101 adaptive parameters.

\begin{proposition}
	\label{prp_positive}
The critic family~\eqref{eqn_critic} is uniformly positive definite and proper on the cylindrical state space.
In an upright coordinate chart, $V=We^\eta s^\top Ps+O(\norm s^3)$ uniformly over its parameter box.
\end{proposition}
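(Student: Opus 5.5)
Since every term of~\eqref{eqn_critic} is nonnegative, the plan is to prove a global lower bound by a positive definite, proper function of $s$ that does not depend on $(L,\eta)$, an upper bound that is uniform over the compact parameter box, and then a local expansion near the upright equilibrium.

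\emph{Lower bound.} Drop the nonnegative terms $\norm{L\psi}^2$ and $We^\eta\gamma\xi_\kappa^\top P\xi_\kappa$, which gives $V\ge \varepsilon H(s)\norm z^2$. The function $H$ depends only on $r(s)\coloneqq x^2+v^2+8(1-\cos\vartheta)+\omega^2/4$, which is a proper, positive definite function on $\R^2\times S^1\times\R$. Moreover
\[
\norm z^2=x^2+v^2/4+4\sin^2\vartheta+4(1-\cos\vartheta)^2+\omega^2=x^2+v^2/4+8(1-\cos\vartheta)+\omega^2,
\]
using $\sin^2\vartheta+(1-\cos\vartheta)^2=2(1-\cos\vartheta)$. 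Hence $\norm z^2\ge r(s)/4$, and therefore
\[
V\ge \tfrac{\varepsilon}{4}\,(1-e^{-r/2})\,r .
\]
The map $r\mapsto(1-e^{-r/2})r$ is continuous, strictly increasing, zero at zero and unbounded. Since $1-\cos\vartheta$ is comparable to the squared wrapped angular distance, $r$ is comparable to the squared distance to the target on the cylinder. Composing these gives a class-$\K_\infty$ lower bound $\kappa_\low$ in the sense of~\eqref{eqn_bounds}, and properness follows.

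\emph{Upper bound.} Since $L\in[-3,3]^{5\times 20}$, we have $\norm{L\psi}^2\le c_1\norm\psi^2$. Also $\norm\psi^2\le \norm z^2+c_2\norm z^4$, and $H\le 1$. The last term is bounded using $e^\eta\le e^{0.7}$, $\gamma\le 1$, $|\kappa|=0.25$ and $\xi_\kappa^\top P\xi_\kappa\le\lambda_{\max}(P)\norm{\xi_\kappa}^2$. Both $\norm z^2$ and $\norm{\xi_\kappa}^2$ are bounded by constant multiples of $r$. This gives a continuous upper bound $\kappa(r)$ vanishing at $0$ and uniform in the parameters; taking its increasing envelope yields a class-$\K$ function $\kappa_\up$.

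\emph{Local expansion.} In the chart $\vartheta\in(-\pi,\pi)$, first note that $H(s)=\tfrac12 r+O(\norm s^4)$ and $r=O(\norm s^2)$. Next, $\norm z^2=O(\norm s^2)$ and $\norm{L\psi}^2=O(\norm s^2)$ uniformly over $L$, so the first bracket is $H\cdot O(\norm s^2)=O(\norm s^4)$. For the anchor term, $\xi_\kappa=s+O(\norm s^2)$ because $\sin\vartheta=\vartheta+O(\vartheta^3)$ and $1-\cos\vartheta=O(\vartheta^2)$, and $\gamma=1+O(\vartheta^2)$. Therefore $\gamma\xi_\kappa^\top P\xi_\kappa=s^\top Ps+O(\norm s^3)$, and since $e^\eta$ is bounded the constant is uniform in $\eta$. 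The expansion also gives a local quadratic lower bound $V\ge We^{-7}\lambda_{\min}(P)\norm s^2/2$ near $0$, which is consistent with the above.

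The main obstacle I expect is making the lower bound genuinely parameter-uniform and proper. It must not rely on the anchor term, since $e^{\eta}$ may be as small as $e^{-7}$ and $\gamma$ decays. The identity $\norm z^2\ge r/4$ resolves this. Care is also needed to state everything in terms of the cylindrical distance rather than the wrapped coordinate.
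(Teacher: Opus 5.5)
Your proposal is correct and follows essentially the same route as the paper: a parameter-independent lower bound $V\ge\varepsilon H\norm z^2$, properness from $H\to1$, $\norm z\to\infty$ and compactness of the angle, and the local expansion from $H=O(\norm s^2)$, $\psi=O(\norm s)$, $\xi_\kappa=s+O(\norm s^2)$, $\gamma=1+O(\norm s^2)$, with uniformity coming from the bounded parameter box. You are more explicit than the paper's sketch: your identity $\norm z^2=x^2+v^2/4+8(1-\cos\vartheta)+\omega^2$ gives a concrete class-$\mathcal K_\infty$ lower bound in the cylindrical distance, and you also supply the uniform upper bound $\kappa_\up$, which the paper leaves implicit.
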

\begin{proof}
The lift $z$ vanishes only at the target, and $H>0$ elsewhere.
Therefore $V\ge\varepsilon H\norm z^2>0$ away from the target, independently of the parameters.
As $\norm{(x,v,\omega)}\to\infty$, $H\to1$ and $\norm z\to\infty$; the angle is compact.
Near the target, $H=O(\norm s^2)$, $\psi=O(\norm s)$, $\xi_\kappa=s+O(\norm s^2)$, and $\gamma=1+O(\norm s^2)$.
The bounded parameter set makes all remainders uniform; $We^\eta$ has a strictly positive lower bound.
\end{proof}
It should be noted that this family is not structure-free: it uses the plant model, periodic coordinates, scales, and a local Riccati matrix.
However, its features contain no mechanical-energy error or predesigned swing-up trajectory.
The positive factorization is not an SOS certificate of a globally negative Lie derivative.
In particular, $\nabla_LV=2H(L\psi)\psi^\top$ can vanish when $L\psi=0$ even if $L\ne0$ and no coefficient lies on a box face.

\subsection{Initialization and feedback}
\label{sub_initialization_and_feedback}
Write $\xi_\kappa=T_\kappa z$ and let rows of $N_\kappa$ form an orthonormal basis of $\ker T_\kappa$.
Define
\begin{equation}
 C_\kappa=T_\kappa^\top PT_\kappa+
       \frac{\operatorname{tr}(T_\kappa^\top PT_\kappa)}4N_\kappa^\top N_\kappa.
\end{equation}
For each $\kappa\in\{-0.25,0.25\}$, compute an upper Cholesky factor $B_\kappa$ satisfying $B_\kappa^\top B_\kappa=C_\kappa$.
Set $L_0^{(\kappa)}\coloneqq[\sqrt5 B_\kappa/\norm{B_\kappa}_{\rm F}\ \ 0_{5\times15}]$ and $\eta_0=0$, where $\norm{\cdot}_{\rm F}$ is the Frobenius norm.
At the initial state $s_0$, evaluate $V(s_0,L_0^{(\kappa)},0)$ separately for these two candidates and retain the candidate with the smaller value.
If the values are equal, choose $\kappa=+0.25$.
The selected $\kappa$ remains fixed throughout the run; only $L$ and $\eta$ subsequently adapt.
Thus each sign is evaluated with its own initialized factor, and the selection occurs once before applying the controller.

The preferred adaptation is recovery toward the initialized parameters: $v_0=0.5(\theta_0-\theta)$, with $\tau=0.01$ s and $M=I$ in \eqref{eqn_projection}.
The required state decay is $w=0.01q$, where
\begin{equation}
	\label{eqn_q}
 q(s)=x^2+0.2v^2+12\vartheta_{\rm wrap}^2+\omega^2.
\end{equation}
The squared principal angle is continuous on $S^1$, although it is not differentiable at the antipode.
The critic remains smooth and periodic; the proofs require continuity, not differentiability, of $q$.

To moderate force variation, use the penalty
\begin{equation}
	\label{eqn_penalty}
 R=R_f+(1-H)(We^\eta R_0-R_f)+t_d HG_{\rm GN},
 \qquad R_f=0.01,\quad t_d=0.002\ \mathrm{s},
\end{equation}
in~\eqref{eqn_greedy}.
For a factor $C^\top C=P$, define residual vectors $r_G=\sqrt H(\sqrt\varepsilon z,L\psi)^\top$ and $r_A=\sqrt{We^\eta\gamma}\,C\xi_\kappa$; then $G_{\rm GN}\coloneqq\norm{\Diff_g r_G}^2+\norm{\Diff_g r_A}^2\ge0$.
Here $\Diff_g r\coloneqq(\Diff_s r)g(s)$ is the directional derivative along the input vector field, with the critic parameters held fixed.
The derivative $\Diff_g r_G$ has a continuous zero extension at the target; $\Diff_g r_A$ generally does not vanish there.
Since the input has no position or angle component, $\norm{\Diff_g r_A}^2=We^\eta\gamma g^\top Pg$ exactly.
The added penalty is $O(\norm s^2)$ and does not change the local gain:
\begin{equation}
	\label{eqn_localgain}
 \policy(s,\theta)=-Ks+O(\norm s^2),\qquad K=R_0^{-1}B^\top P,
\end{equation}
uniformly in boxed parameters.
This nonnegative first-derivative surrogate is not a Hessian bound and does not itself guarantee feasibility, smooth commands, or a force slew limit.

\subsection{Local persistent feasibility}
\label{sub_local_persistent_feasibility}
Let $A,B$ be the upright linearization, $A_c=A-BK$, $Q_c=-(PA_c+A_c^\top P)$, and $S_-=We^{-7}$.
Uniform Taylor bounds in~\eqref{eqn_localgain} make $s^\top Ps$ a common local Lyapunov function for every fixed or varying boxed parameter selection.
For a zero parameter velocity, the normalized critic derivative has leading term $-s^\top Q_c s$.
Thus positive definiteness of $Q_c-(\sigma+\zeta)Q_s/S_-$, with $\sigma=0.01$, $Q_s=\operatorname{diag}(1,0.2,12,1)$, and a suitable reserve $\zeta\ge0$, gives a uniform local feasible witness $v=0$.

For fixed hold $h$, let $A_h=e^{Ah}$, $B_h=\int_0^h e^{At}B\,dt$, and $F_h=A_h-B_hK$.
The analogous leading matrix for~\eqref{eqn_endpoint} is
\begin{equation}
	\label{eqn_samplematrix}
 D_h=P-F_h^\top\left(P+\frac{h\sigma}{2S_-}Q_s\right)F_h
       -\frac{h(\sigma/2+\delta+\zeta)}{S_-}Q_s.
\end{equation}
When this and $P-F_h^\top PF_h$ are positive definite, uniform remainders give a common sufficiently small invariant state ellipsoid on which the zero update is feasible.
Box invariance is enforced directly, so this argument does not require an exact projection minimizer: any verified feasible boxed update is admissible.
Standard local held-flow bounds cover the intersample state.
This proves local persistent feasibility for the ideal nominal law, not entry from a hanging initial condition.

For the specified critic, $h=0.002$ s and $\delta=10^{-4}$, conservative exact-rational remainder and matrix checks accept
\begin{equation}
	\label{eqn_radius}
 r_*=2^{-25},\qquad s^\top Ps\le3.56591397745\times10^{-16}.
\end{equation}
The sampled and continuous error-bound comparisons are, respectively, $0.0003194091<0.0003770750$ and $0.05970460<0.19093633$ (rounded displays).
The stored binary matrix coefficients are interpreted as exact rationals; matrix positivity is checked by exact elimination and matrix exponentials are enclosed with Taylor remainders.
The ancillary certificate and executable checker record and reproduce the unrounded constants and matrix; Appendix~\ref{sec_local} states the sufficient inequalities.
The radius is extremely conservative and is \emph{not} a practical estimate of the swing-up attraction region.
The numerical simulations and a floating-point solver are not certified by this exact-arithmetic calculation.

\section{Numerical Study}
\label{sec_numerical_study}
\subsection{Protocol and implementation}
\label{sub_protocol_and_implementation}
The selected controller was frozen before the primary 300-state draw.
Two hundred starts have independent uniform coordinates in $[-1,1]\times[-2,2]\times[-\pi,\pi]\times[-4,4]$, using seed 20261024.
One hundred near-hanging starts use independent uniforms $x\in[-0.3,0.3]$, $v\in[-0.5,0.5]$, $\vartheta\in\pi+[-0.3,0.3]$, $\omega\in[-0.7,0.7]$, seed 20261025.
The preceding twenty-start development screen compared curvature time constants $0,1,2,4$ ms.
Selection required all development paths to settle with clean numerical decay, a median paired variation ratio at most $0.75$, and a cost ratio at most $1.1$; eligible settings were ranked by variation, then cost, then time constant before drawing the fresh validation states.
The initialization rule, model, force limit, and remaining coefficients were then held fixed.

Simulations use Regelum 0.3.15~\cite{Malaniya2025RegelumGraphDep}, with CasADi~\cite{Andersson2019CasADiSoftwareF} CVODES plant integration at relative/absolute tolerances $10^{-10}/10^{-12}$ for the sampled controller.
The continuous augmented state--parameter flow is integrated independently by DOP853 with $10^{-7}/10^{-9}$ tolerances and maximum step $0.02$ s.
The horizon is 45 s.
A path is settled when its last two seconds remain within $|s|<(0.1,0.2,0.05,0.1)$; settling time is the earliest entry remaining inside through the horizon, with at least two seconds left.

The held implementation retains the instantaneous halfspace and adds \eqref{eqn_endpoint}.
At fixed $y$, the endpoint expression is quadratic in $L^+$ and exponential in $\eta^+$, hence convex.
A conic formulation uses second-order and exponential cones and Clarabel~\cite{Goulart2024ClarabelInterio}; rejected numerical candidates may be refined by SLSQP on the same convex constraints.
A final segment repair toward a verified feasible witness is allowed only after checking the original constraints.
This is not a different physical controller, and a repaired answer is not asserted to be the exact minimum-distance optimizer.
Neither refinement nor repair can make a genuinely empty feasible set nonempty.

Numerical success requires full-horizon completion, settling, and both the instantaneous residual and normalized endpoint residual at most $10^{-6}$.
Parameter-box and actual endpoint checks are retained.
These finite tolerances are audit criteria, not replacements for the zero-tolerance inequalities in the theorems.
Independent audits reintegrate recorded held inputs with DOP853 at $10^{-12}/10^{-14}$.
No audited interval with a genuinely positive reproduced violation is relabeled as numerical noise.

\subsection{Validation and comparison}
\label{sub_validation_and_comparison}
Table~\ref{tab_validation} reports the primary outcomes and a second draw.
The second 300-state set, seeds 20261026/27 with the same distributions, was introduced to test a longer-hold candidate.
The unchanged 2 ms controller was retained as its reference and passes all 300 sampled starts.
A separate later evaluation of its unchanged continuous law passes the corresponding 200 box starts.
These are distinct evaluations on reused states, not an additional 200 independent draws.

\begin{table}[t]
\centering
\caption{Nominal cart-pole results.
Strict means the stated numerical decay checks, not a certified region.
Continuous and sampled variants share starts.
The two draw sets contain 600 distinct initial states.}
\label{tab_validation}
\begin{tabular}{llrrr}\toprule
Draw & Controller/evaluation & Starts & Settled & Strict\\\midrule
Primary & CALF, 2 ms held & 300 & 300 & 300\\
Primary & CALF, continuous (box) & 200 & 200 & 200\\
Primary & No-curvature reference, 2 ms & 300 & 300 & 299\\
Primary & NMPC, 100 ms (near hanging) & 100 & 100 & --\\
Second & Unchanged CALF, 2 ms held & 300 & 300 & 300\\
Second & Unchanged CALF, continuous (box) & 200 & 200 & 200\\
Second & Longer-hold candidate, 4 ms & 300 & 300 & 299\\
Second & Longer-hold law, continuous (box) & 200 & 199 & 199\\\bottomrule
\end{tabular}
\end{table}

The primary sample-and-hold candidate/reference median paired command variation ratios are \BoxVariation{} for the box and \NearVariation{} near hanging.
Median paired common-cost ratios are 0.99624 and 0.98584.
This is not uniform improvement: the worst box cost ratio is 2.537 and median box cart travel increases by approximately 1.1\%.
Figure~\ref{fig_trajectory} shows the first near-hanging validation case, selected by its preexisting index rather than by best performance.

\begin{figure}[t]
\centering\includegraphics[width=.95\linewidth]{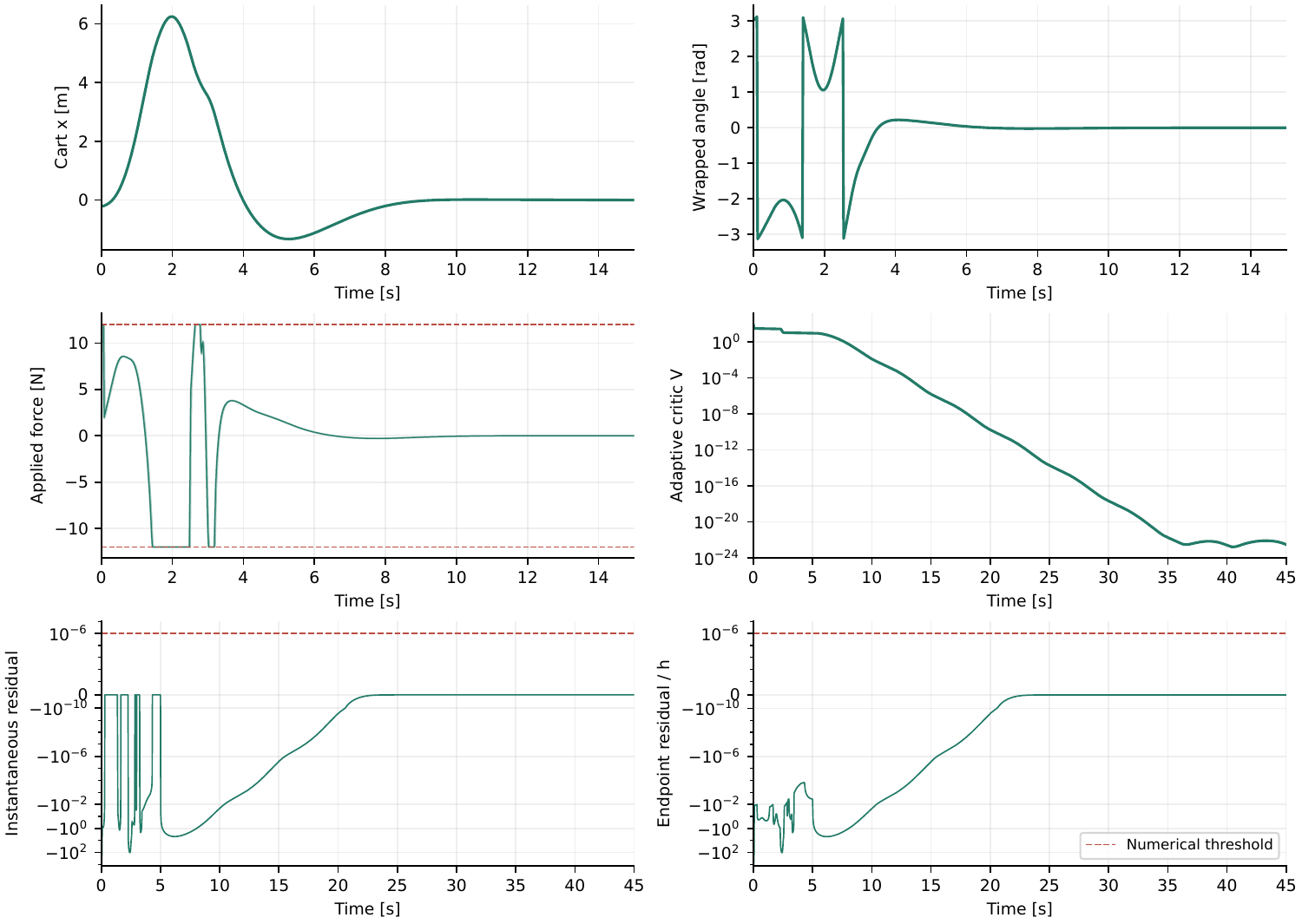}
\caption{First near-hanging validation start with the frozen 12 N CALF.
The first three panels show the maneuver; critic and residual panels cover all 45 s.
Stored 20 ms bins preserve the minimum/maximum force and maximum residual over their ten 2 ms decisions.
The dashed residual threshold is a numerical acceptance criterion, not the theorem's bound.
No energy-based swing-up controller or LQR handoff is used.}
\label{fig_trajectory}
\end{figure}

The matched NMPC comparator uses the same plant and 12 N limit, a 100 ms control period, 30 stages (3 s), multiple shooting, and the common running cost
\begin{equation}
	\label{eqn_commoncost}
 J=\int_0^{45}\{2x^2+v^2+140(1-\cos\vartheta)+8\omega^2+0.35u^2\}\,dt.
\end{equation}
It uses a local terminal quadratic augmented by $70(1-\cos\vartheta)^2$, but no certified terminal set.
Warm starts are shifted after each solve.
At startup, both angle-rotation directions and a stationary seed are considered; a stationary restart is added every ten updates.
Failure of every solve is recorded, not hidden by a backup controller.
The comparator is thus a practical numerical NMPC design, not a globally certified optimal controller.

Both methods settle all 100 matched near-hanging starts.
Median common cost is \CalfCost{} for CALF and \MpcCost{} for NMPC; median settling times are \CalfSettling{} and \MpcSettling{} s, and median peak cart displacements are \CalfTravel{} and \MpcTravel{} m, respectively.
Figure~\ref{fig_comparison} retains the complete distributions.
\begin{figure}[t]
\centering\includegraphics[width=\linewidth]{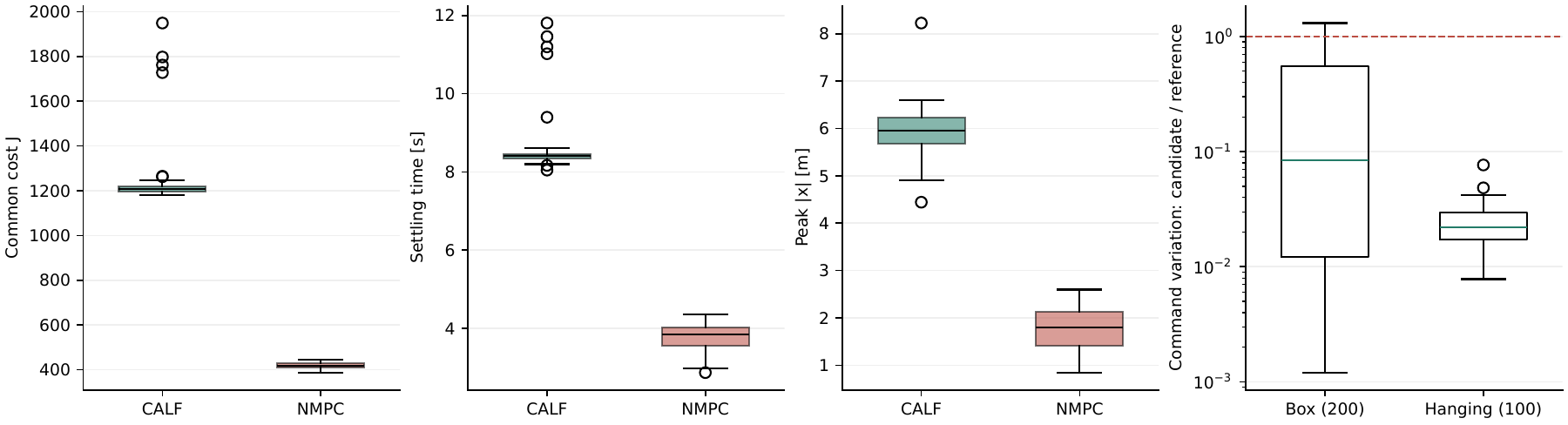}
\caption{Left three panels: all 100 matched near-hanging starts, with lower values preferable.
Right: paired command-variation ratios against the no-curvature CALF reference on all 300 primary starts.
Boxes show quartiles and medians, whiskers extend to 1.5 interquartile ranges, and points beyond are retained.
Cost/travel improvements over NMPC are not claimed.}
\label{fig_comparison}
\end{figure}

\subsection{Audits, computation, and a negative result}
\label{sub_audits_computation_and_a_negative_result}
Thirty-two primary full replays reproduce their archived arrays exactly.
Independent reintegration checks 720000 held intervals: all 16 selected candidate paths pass and 15 of 16 reference paths pass, reproducing the known reference failure.
Maximum state disagreement is below $7.03\times10^{-9}$.
The audits include selected rather than all primary paths; they do not turn the full sampled batch into a validated enclosure.
All 500 primary candidate numerical paths enter the very small local ellipsoid~\eqref{eqn_radius} and stay inside at subsequent logged nodes.
Numerical entry is not an exact-flow proof that swing-up reaches that set.

A quiet single-core benchmark replays six matched initial states for each controller with three sequential repeats.
The median paired total CPU ratio NMPC/CALF is \CpuRatio{}.
The respective median CPU rates are 0.3819 and 0.2277 s per simulated second.
This measures different controllers at their respective 100 ms and 2 ms update periods, not equal-cost solves.
Median steady deadline-miss fractions are 0.1114\% and 0.7023\%; worst steady calls are 431.43 and 4.93 ms.
The simulated plant does not include these computation delays.
Neither hard-real-time operation nor a worst-case execution-time bound is established.

The longer-hold extension is important counterevidence.
A 4 ms hold with a 4 ms curvature penalty passed all 20 reused development starts but failed the next draw: one sampled path had two reproduced endpoint violations despite physically settling, and one continuous path lost feasibility.
The maximum sampled residual was 140.58064, not a roundoff-sized excursion.
This candidate was not promoted.
Changing the penalty changes the feedback, so this is not a discretization-convergence test of one fixed continuous law.

\section{Discussion and Conclusion}
\label{sec_discussion_and_conclusion}
The results demonstrate that an adaptive positive critic can support force-limited swing-up and stabilization without encoding mechanical-energy error, learning from replay data, or switching to a separate local controller.
The local Riccati anchor and deterministic initialization remain meaningful prior structure; the method is neither model-free nor independent of design choices.
Its role is to supply an online mechanism for enforcing a specified critic decrease, not to establish value-function optimality.

The exact pointwise test identifies the compensation available from bounded parameter motion.
It does not prove that the available compensation will persist along the future path.
Both the counterexample and the failed longer-hold experiment make this limitation explicit.
Uniform positivity, existence of closed-loop solutions, and persistent feasibility cannot be replaced by critic richness or by successful trajectories.

For the reported plant, the frozen 2 ms recipe has consistent nominal numerical evidence and a conservative ideal local certificate.
The main remaining theoretical gap is a useful large-excursion feasibility or entry certificate.
Robustness to disturbances, mismatch, actuator dynamics, and finite computation delay is also outside the present result.
The comparison with NMPC exhibits a computational/performance trade-off, not general superiority.
These qualifications define a reproducible continuous-time CALF construction and a concrete starting point for broader adaptive-stabilization studies.

\appendix
\section{Endpoint Optimization and Failure Diagnosis}
\label{sec_endpoint_optimization_and_failure_diagnosis}
For fixed predicted endpoint $y$ and increment $d=hv$, write $H_y=H(y)$, $\psi_y=\psi(y)$, $c_y=\varepsilon\norm{z(y)}^2$, and $A_y=W\gamma(y)\xi_\kappa(y)^\top P\xi_\kappa(y)\ge0$.
The endpoint constraint is
\begin{equation}
 H_y\norm{(L+\Delta L)\psi_y}^2+A_y e^{\eta+d_\eta}
       +H_yc_y\le B,
\end{equation}
where $B$ is the decay budget from~\eqref{eqn_endpoint}.
Introduce epigraph variables $t,e$.
Its conic representation is
\begin{align}
 (t+1,\ 2\sqrt{H_y}(L+\Delta L)\psi_y,\ t-1)&\in\mathcal Q,\nonumber\\
 (\eta+d_\eta,1,e)&\in\mathcal K_{\exp},\qquad t+A_ye+H_yc_y\le B,
\end{align}
together with $hl\le d\le hr$ and $p^\top d+ha\le0$.
The objective is $\frac12\norm{d-hv_0}^2$, with the same minimizer as the velocity objective.
The second-order cone enforces $t\ge H_y\norm{(L+\Delta L)\psi_y}^2$.

A useful necessary endpoint test minimizes the critic over the reachable parameter box, omitting the derivative halfspace.
Each row projection $L_i^+\psi_y$ ranges over a scalar interval $[a_i,b_i]$, obtained by componentwise interval extrema.
The minimum squared projection is $\operatorname{dist}(0,[a_i,b_i])^2$.
The anchor minimum uses the smallest reachable $\eta^+$.
Thus the relaxed endpoint minimum is explicit.
If it exceeds $B$, the fixed-input joint update is infeasible.
Evaluation in floating point is a diagnostic, not an interval proof; it also says nothing about every alternative input.

When the box-minimizing witness satisfies both constraints, convexity permits bisection on the segment joining it to a rejected numerical candidate.
Acceptance rechecks the original box, instantaneous inequality, and endpoint inequality.
Otherwise no repair is accepted.
This explains the difference between numerical projection error and an exhausted feasibility budget.

\section{Sufficient Local Bounds}
\label{sec_local}
This appendix states the estimates used by the ancillary exact-arithmetic checker.
They concern the nominal mathematical feedback, not numerical integration.
All norms below are Euclidean or induced norms; upper bounds may be deliberately loose.
Write $r=\norm s\le1$, $d=5$, $B_L=20d\,3^2$, $\bar P=\operatorname{tr}P$, $\underline P=1/\operatorname{tr}(P^{-1})$, $K_*\ge\norm K$, $B_*=(1+\ell^{-1})/M$, and $T=1+1/6+|\kappa|/2$.
Let $E_-\le e^{-7}$ and $E_2\ge e^2$ be positive rational bounds; put $S_*=WE_-$.
These bounds cover both signs of the fixed initial angle bias.

\subsection{Critic and force remainders}
\label{sub_critic_and_force_remainders}
The lift satisfies $\norm z\le2r$, $\norm{Dz}\le2$, $\norm\psi\le\sqrt5r$, $\norm{D\psi}\le\sqrt8$, $H\le2r^2$, and $\norm{\nabla H}\le4r$.
The monomial estimate follows from $\sum_{i\le j}z_i^2z_j^2\le\norm z^4$.
Consequently the generic gradient is at most $C_G r^3$, where $C_G=32\varepsilon+48B_L$.
The anchor coordinates satisfy
\[
 \norm{\xi_\kappa-s}\le|\kappa|r^2/2+r^3/6,\quad
 \norm{D\xi_\kappa-I}\le|\kappa|r+r^2/2,\quad
 \norm{\xi_\kappa}\le Tr.
\]
Differentiating the gated quadratic, separating the gate, coordinate and Jacobian errors, gives anchor gradient coefficients
\[
 A_2=W\bar P|\kappa|(2T+1),\qquad
 A_3=W\bar P\{(1+T^2)/b^2+T+1/3\}.
\]
Thus $\norm{\nabla_s(W\gamma\xi_\kappa^\top P\xi_\kappa)-2WPs} \le A_2r^2+A_3r^3$.
The input field obeys $\norm{g(s)-B}\le C_g r^2$, with
\[
 C_g=m/M^2+\{m/M^2+1/(2M)\}/\ell.
\]
Before the Gauss--Newton penalty is added, the unsaturated force has $|u+Ks|\le U_2r^2+U_3r^3$, where
\begin{align*}
 U_2&=E_2 A_2B_* /(2WR_0),\\
 U_3&=E_2\left\{\bar P C_g/R_0+A_3B_* /(2WR_0)
       +C_GB_* /(2S_*R_0)+2(1+R_f/(S_*R_0))K_*\right\}.
\end{align*}
Here the lower bound $R\ge e^{-2}We^\eta R_0$ cancels the anchor weight before estimating its gradient error.

For the added penalty, direct differentiation of $r_G$ and $\norm{a+b}^2\le2\norm a^2+2\norm b^2$ give $\norm{\Diff_g r_G}^2\le(20\varepsilon+37B_L)B_*^2r^2$.
Indeed $(\Diff_gH)^2/(2H)\le B_*^2$ away from the target, using $1-e^{-t}\ge te^{-t}$; continuity covers the target.
Also $\norm{\Diff_g r_A}^2\le We^\eta\bar P B_*^2$.
The relative added penalty is therefore at most $C_D r^2$, where
\[
 C_D=\frac{2t_d E_2B_*^2}{R_0}
          \left(\bar P+\frac{20\varepsilon+37B_L}{S_*}\right).
\]
The identity $u_{\rm new}+Ks=(R/R_{\rm new})(u+Ks) +(1-R/R_{\rm new})Ks$ replaces $U_3$ by $U_3+C_DK_*$.
Below $U_3$ denotes this enlarged value.

\subsection{Common state bounds}
\label{sub_common_state_bounds}
Set
\begin{align*}
 C_{a0}&=(K_*+mg)/M,\\
 C_a&=K_*m/M^2+m\ell/M+2mg/(3M)+m^2g/M^2,\\
 C_\omega&=(g/6+C_a+C_{a0}/2)/\ell,\\
 F_2&=B_*U_2,\qquad F_3=C_a+C_\omega+B_*U_3.
\end{align*}
Subtracting the upright linearization from the physical equations gives $\norm{f+gk-A_cs}\le F_2r^2+F_3r^3$, provided the force bound below ensures that saturation is inactive.
For $q_c=1/\operatorname{tr}(Q_c^{-1})>0$, sufficient continuous conditions are
\[
 2\bar P(F_2r+F_3r^2)\le q_c/2,\qquad
 K_*r+U_2r^2+U_3r^3\le12.
\]
They make $s^\top Ps\le\underline P r^2$ invariant with strict quadratic state decrease for arbitrary boxed parameter variation.

For the held flow, use $G_h\ge e^{\norm A h}$ and an integer $\beta\ge2G_h\{1+hB_*(K_*+1)\}$.
Let $C_d$ be $C_a+C_\omega$ with $K_*$ replaced by $K_*+1$.
A bootstrap on $[0,h]$ yields
\[
 \norm{s^+-F_hs}\le D_2r^2+D_3r^3,\quad
 D_2=hG_hB_*U_2,\quad D_3=hG_h(B_*U_3+C_d\beta^3),
\]
if $\beta r\le1$, $U_2r+U_3r^2\le1$, and $G_hhC_d\beta^3r^2\le\beta/2$.
In particular $\norm{s(t)}\le\beta\norm{s(0)}$ during the hold.
Let $q_h>0$ lower-bound $\lambda_{\min}(P-F_h^\top PF_h)$ and $H_*\ge\norm{F_h}$.
With $E_h=D_2r+D_3r^2$, the additional condition
\[
 \bar P(2H_*E_h+E_h^2)\le q_h/2
\]
and the same force bound give common sampled state invariance and decay.

\subsection{Feasibility inequalities}
\label{sub_feasibility_inequalities}
After normalization by $We^\eta$, valid gradient and value error bounds are $g_2r^2+g_3r^3$ and $v_3r^3+v_4r^4$, respectively, where
\begin{align*}
 g_2&=\bar P|\kappa|(2T+1),&
 g_3&=\bar P\{(1+T^2)/b^2+T+1/3\}+C_G/S_*,\\
 v_3&=\bar P|\kappa|(T+1)/2,&
 v_4&=\bar P\{1/(2b^2)+(T+1)/6\}+(8\varepsilon+10B_L)/S_*.
\end{align*}
Also $\norm{\nabla_LV}\le20\sqrt{B_L}r^4$ and $\norm{\dot L}\le2\sqrt{B_L}/\tau$.
Let $m_c>0$ and $m_h>0$ be verified lower eigenvalue bounds for the continuous feasibility matrix and~\eqref{eqn_samplematrix}, using $S_*$ in place of $S_-$ and $\zeta=\delta/10$.
This reserve covers the solver's additional margin $\min(10^{-4},\delta q/10)$.
Let $J_*\ge\norm{A_c}$ and $T_*\ge\norm{P+h\sigma Q_s/(2S_*)}$.
The sufficient tests are
\begin{align*}
 2\bar P E_c+(g_2r+g_3r^2)(J_*+E_c)
          +40B_Lr^2/(\tau S_*)&\le m_c/2,\quad E_c=F_2r+F_3r^2,\\
 T_*(2H_*E_h+E_h^2)+v_3(1+\beta^3)r
          +v_4(1+\beta^4)r^2&\le m_h/2.
\end{align*}
The first bounds the normalized total derivative with the anchor fixed, even for arbitrary allowed factor velocities.
The second bounds both endpoint critic errors independently for arbitrary boxed factors with the anchor fixed.
Thus zero parameter velocity is a feasible witness.
Combining these tests with the common state bounds preserves a joint local state--parameter region for every verified feasible boxed update.

\subsection{Exact numerical check}
\label{sub_exact_numerical_check}
The ancillary implementation evaluates these constants as fractions.
For any positive-definite rational matrix $Z$, exact symmetric elimination checks its positive pivots, and $1/\operatorname{tr}(Z^{-1})$ is a valid lower eigenvalue bound.
Degree-12 Taylor polynomials enclose $F_h$; if $\norm{F_h-\widehat F_h}\le e_h$, the pullback error is at most $\norm{T}(2\norm{\widehat F_h}e_h+e_h^2)$ for the corresponding matrix $T$.
Positive scalar exponentials use the rational bound
\[
 e^a\le\sum_{j=0}^N\frac{a^j}{j!}
       +\frac{a^{N+1}}{(N+1)!}\frac1{1-a/(N+2)},\qquad0\le a<N+2.
\]
Reciprocation supplies $E_-$.
Halving a rational radius until every sufficient inequality holds produces~\eqref{eqn_radius}.
The checker recomputes every stored final constant and rejects a changed matrix or arithmetic implementation.
It verifies this calculation conditional on the analytic estimates, not a formal proof of the software or a validated enclosure of the preceding swing-up.

\section{Reproducibility and Claim Boundaries}
\label{sec_reproducibility_and_claim_boundaries}
The ancillary material contains the frozen cart-pole configuration, all compact primary and second-draw outcome tables, the supplementary continuous evaluation, matched NMPC and paired-reference tables, independent audit summaries, timing summaries, and exact local-certificate constants.
The figure script regenerates the two figures from those data alone.
A separate checker recomputes the exact local constants using the archived binary matrix, without dependence on a particular floating-point Riccati solver.
The representative trajectory is a compact reduction of a full audited replay; its bin maxima preserve all positive residual excursions.
Source hashes identify the underlying archived data.
No double-pendulum or other plant result is used in this preprint.

There are 600 distinct sampled initial conditions, not 1000 independent states obtained by adding 400 continuous evaluations.
The matched NMPC table repeats 100 of the primary states, and replay/audit records repeat selected paths.
A zero observed failure count is not a proof of a region or a probability-one guarantee.
The local exact-arithmetic calculation concerns an ideal nominal controller and cannot be promoted to a machine-arithmetic or validated full-swing-up certificate.

\section{Note on AI Use}
\label{sec_note_on_ai_use}
AI was used to assist with the case study and to polish the mathematics and language.
The idea and approach were created by the author.
Prototypical forms appeared in the author's earlier work~\cite{Osinenko2022StabilizingRein}; see also the published actor-critic framework~\cite{Osinenko2023ActorCriticFram}.
The motivation was to develop CALF (Critic as Lyapunov Function), an approach created by the author and defended in his habilitation, to remove the need for a fallback policy.
The current work marks a milestone in this development.
Further case studies and research are ongoing.

\begingroup
\let\standardbibliography\thebibliography
\renewcommand{\thebibliography}[1]{%
  \standardbibliography{#1}%
  \setlength{\itemsep}{0pt}%
}
\bibliographystyle{ieeetr}
\bibliography{AIDA}

@Article{Sontag1989UniversalConstr,
  author  = {Sontag, Eduardo D.},
  title   = {A `universal' construction of {Artstein}'s theorem on nonlinear stabilization},
  journal = {Systems \& Control Letters},
  year    = {1989},
  volume  = {13},
  number  = {2},
  pages   = {117--123},
  doi     = {10.1016/0167-6911(89)90028-5},
  note    = {\href{https://doi.org/10.1016/0167-6911(89)90028-5}{doi:10.1016/0167-6911(89)90028-5}},
}

@Article{Krstic1995ControlLyapunov,
  author  = {Krsti\'{c}, Miroslav and Kokotovi\'{c}, Peter V.},
  title   = {Control {Lyapunov} functions for adaptive nonlinear stabilization},
  journal = {Systems \& Control Letters},
  year    = {1995},
  volume  = {26},
  number  = {1},
  pages   = {17--23},
  doi     = {10.1016/0167-6911(94)00107-7},
  note    = {\href{https://doi.org/10.1016/0167-6911(94)00107-7}{doi:10.1016/0167-6911(94)00107-7}},
}

@Article{Osinenko2020ReinforcementL,
  author  = {Osinenko, Pavel and Beckenbach, Lukas and G\"{o}hrt, Thomas and Streif, Stefan},
  title   = {A reinforcement learning method with closed-loop stability guarantee},
  journal = {IFAC-PapersOnLine},
  year    = {2020},
  volume  = {53},
  number  = {2},
  pages   = {8043--8048},
  doi     = {10.1016/j.ifacol.2020.12.2237},
  note    = {\href{https://doi.org/10.1016/j.ifacol.2020.12.2237}{doi:10.1016/j.ifacol.2020.12.2237}},
}

@Article{Osinenko2023ActorCriticFram,
  author  = {Osinenko, Pavel and Yaremenko, Grigory and Malaniya, Georgiy and Bolychev, Anton},
  title   = {An actor-critic framework for online control with environment stability guarantee},
  journal = {IEEE Access},
  year    = {2023},
  volume  = {11},
  pages   = {89188--89204},
  doi     = {10.1109/ACCESS.2023.3306070},
  note    = {\href{https://doi.org/10.1109/ACCESS.2023.3306070}{doi:10.1109/ACCESS.2023.3306070}},
}

@InProceedings{Osinenko2024CriticAsLyapuno,
  author    = {Osinenko, Pavel and Yaremenko, Grigory and Zashchitin, Roman and Bolychev, Anton and Ibrahim, Sinan and Dobriborsci, Dmitrii},
  title     = {Critic as {Lyapunov} function ({CALF}): a model-free, stability-ensuring agent},
  booktitle = {2024 IEEE 63rd Conference on Decision and Control (CDC)},
  year      = {2024},
  pages     = {2517--2524},
  publisher = {IEEE},
  doi       = {10.1109/CDC56724.2024.10886034},
  note      = {\href{https://doi.org/10.1109/CDC56724.2024.10886034}{doi:10.1109/CDC56724.2024.10886034}},
}

@Article{Hauswirth2021ProjectedDynami,
  author  = {Hauswirth, Adrian and Bolognani, Saverio and D\"{o}rfler, Florian},
  title   = {Projected dynamical systems on irregular, non-{Euclidean} domains for nonlinear optimization},
  journal = {SIAM Journal on Control and Optimization},
  year    = {2021},
  volume  = {59},
  number  = {1},
  pages   = {635--668},
  doi     = {10.1137/18M1229225},
  note    = {\href{https://doi.org/10.1137/18M1229225}{doi:10.1137/18M1229225}},
}

@InProceedings{Malaniya2025RegelumGraphDep,
  author    = {Malaniya, Georgiy and Bolychev, Anton and Osinenko, Pavel and Yaremenko, Grigory},
  title     = {{Regelum}: graph dependency resolution and execution orchestration for control systems},
  booktitle = {2025 IEEE 26th International Conference of Young Professionals in Electron Devices and Materials (EDM)},
  year      = {2025},
  pages     = {1470--1475},
  publisher = {IEEE},
  doi       = {10.1109/EDM65517.2025.11096894},
  note      = {\href{https://doi.org/10.1109/EDM65517.2025.11096894}{doi:10.1109/EDM65517.2025.11096894}},
}

@Article{Andersson2019CasADiSoftwareF,
  author  = {Andersson, Joel A. E. and Gillis, Joris and Horn, Greg and Rawlings, James B. and Diehl, Moritz},
  title   = {{CasADi}: a software framework for nonlinear optimization and optimal control},
  journal = {Mathematical Programming Computation},
  year    = {2019},
  volume  = {11},
  number  = {1},
  pages   = {1--36},
  doi     = {10.1007/s12532-018-0139-4},
  note    = {\href{https://doi.org/10.1007/s12532-018-0139-4}{doi:10.1007/s12532-018-0139-4}},
}

@Misc{Goulart2024ClarabelInterio,
  author        = {Goulart, Paul J. and Chen, Yuwen},
  title         = {{Clarabel}: an interior-point solver for conic programs with quadratic objectives},
  year          = {2024},
  eprint        = {2405.12762},
  archivePrefix = {arXiv},
  primaryClass  = {math.OC},
  howpublished  = {\href{https://arxiv.org/abs/2405.12762}{arXiv:2405.12762}},
}

@Misc{Osinenko2022StabilizingRein,
  author        = {Osinenko, Pavel and Yaremenko, Grigory and Malaniya, Georgiy},
  title         = {On stabilizing reinforcement learning without {Lyapunov} functions},
  year          = {2022},
  eprint        = {2207.08730v2},
  archivePrefix = {arXiv},
  primaryClass  = {eess.SY},
  howpublished  = {\href{https://arxiv.org/abs/2207.08730v2}{arXiv:2207.08730v2}},
}

@Article{Vamvoudakis2010OnlineActorCrit,
  author  = {Vamvoudakis, Kyriakos G. and Lewis, Frank L.},
  title   = {Online actor--critic algorithm to solve the continuous-time infinite horizon optimal control problem},
  journal = {Automatica},
  year    = {2010},
  volume  = {46},
  number  = {5},
  pages   = {878--888},
  doi     = {10.1016/j.automatica.2010.02.018},
  note    = {\href{https://doi.org/10.1016/j.automatica.2010.02.018}{doi:10.1016/j.automatica.2010.02.018}},
}

@InProceedings{Dierks2010OptimalControlO,
  author    = {Dierks, T. and Jagannathan, S.},
  title     = {Optimal control of affine nonlinear continuous-time systems},
  booktitle = {Proceedings of the 2010 American Control Conference},
  year      = {2010},
  pages     = {1568--1573},
  publisher = {IEEE},
  doi       = {10.1109/ACC.2010.5531586},
  note      = {\href{https://doi.org/10.1109/ACC.2010.5531586}{doi:10.1109/ACC.2010.5531586}},
}

@Article{Lopez2022UniversalAdapti,
  author  = {Lopez, Brett T. and Slotine, Jean-Jacques E.},
  title   = {Universal adaptive control of nonlinear systems},
  journal = {IEEE Control Systems Letters},
  year    = {2022},
  volume  = {6},
  pages   = {1826--1830},
  doi     = {10.1109/LCSYS.2021.3133359},
  note    = {\href{https://doi.org/10.1109/LCSYS.2021.3133359}{doi:10.1109/LCSYS.2021.3133359}},
}

@Article{Sinaei2026SafeAdaptiveCon,
  author  = {Sinaei, Kasra and Ebeigbe, Donald},
  title   = {Safe adaptive control with vanishing conservativeness for robotic systems with unknown dynamics via barrier functions},
  journal = {IEEE Robotics and Automation Letters},
  year    = {2026},
  volume  = {11},
  number  = {5},
  pages   = {6161--6168},
  doi     = {10.1109/LRA.2026.3678137},
  note    = {\href{https://doi.org/10.1109/LRA.2026.3678137}{doi:10.1109/LRA.2026.3678137}},
}

@InProceedings{Chow2018ALyapunovBasedA,
  author    = {Chow, Yinlam and Nachum, Ofir and Duenez-Guzman, Edgar and Ghavamzadeh, Mohammad},
  title     = {A {Lyapunov}-based approach to safe reinforcement learning},
  booktitle = {Advances in Neural Information Processing Systems},
  year      = {2018},
  volume    = {31},
  publisher = {Curran Associates, Inc.},
  url       = {https://papers.neurips.cc/paper_files/paper/2018/hash/4fe5149039b52765bde64beb9f674940-Abstract.html},
  note      = {\href{https://papers.neurips.cc/paper_files/paper/2018/hash/4fe5149039b52765bde64beb9f674940-Abstract.html}{NeurIPS 2018 proceedings}},
}

@Article{Han2020ActorCriticRein,
  author  = {Han, Minghao and Zhang, Lixian and Wang, Jun and Pan, Wei},
  title   = {Actor-critic reinforcement learning for control with stability guarantee},
  journal = {IEEE Robotics and Automation Letters},
  year    = {2020},
  volume  = {5},
  number  = {4},
  pages   = {6217--6224},
  doi     = {10.1109/LRA.2020.3011351},
  note    = {\href{https://doi.org/10.1109/LRA.2020.3011351}{doi:10.1109/LRA.2020.3011351}},
}

@Article{Khaniki2026SafeAndAdaptive,
  author  = {Khaniki, Mohammad Ali Labbaf and Mirzaee, Morteza and Moradi, Elahe},
  title   = {Safe and adaptive control of non-stationary stochastic systems via {Lyapunov}-constrained distributional reinforcement learning},
  journal = {Scientific Reports},
  year    = {2026},
  volume  = {16},
  number  = {1},
  eid     = {26184},
  doi     = {10.1038/s41598-026-66485-2},
  note    = {Article 26184. \href{https://doi.org/10.1038/s41598-026-66485-2}{doi:10.1038/s41598-026-66485-2}},
}

@InProceedings{Chang2019NeuralLyapunovC,
  author    = {Chang, Ya-Chien and Roohi, Nima and Gao, Sicun},
  title     = {Neural {Lyapunov} control},
  booktitle = {Advances in Neural Information Processing Systems},
  year      = {2019},
  volume    = {32},
  publisher = {Curran Associates, Inc.},
  url       = {https://papers.neurips.cc/paper_files/paper/2019/hash/2647c1dba23bc0e0f9cdf75339e120d2-Abstract.html},
  note      = {\href{https://papers.neurips.cc/paper_files/paper/2019/hash/2647c1dba23bc0e0f9cdf75339e120d2-Abstract.html}{NeurIPS 2019 proceedings}},
}

@InProceedings{Mehrjou2021NeuralLyapunovR,
  author    = {Mehrjou, Arash and Ghavamzadeh, Mohammad and Sch\"{o}lkopf, Bernhard},
  title     = {Neural {Lyapunov} redesign},
  booktitle = {Proceedings of the 3rd Conference on Learning for Dynamics and Control},
  year      = {2021},
  volume    = {144},
  series    = {Proceedings of Machine Learning Research},
  pages     = {459--470},
  publisher = {PMLR},
  url       = {https://proceedings.mlr.press/v144/mehrjou21a.html},
  note      = {\url{https://proceedings.mlr.press/v144/mehrjou21a.html}},
}

@InProceedings{Yang2024LyapunovStableN,
  author    = {Yang, Lujie and Dai, Hongkai and Shi, Zhouxing and Hsieh, Cho-Jui and Tedrake, Russ and Zhang, Huan},
  title     = {{Lyapunov}-stable neural control for state and output feedback: a novel formulation},
  booktitle = {Proceedings of the 41st International Conference on Machine Learning},
  year      = {2024},
  volume    = {235},
  series    = {Proceedings of Machine Learning Research},
  pages     = {56033--56046},
  publisher = {PMLR},
  url       = {https://proceedings.mlr.press/v235/yang24f.html},
  note      = {\url{https://proceedings.mlr.press/v235/yang24f.html}},
}

@InProceedings{Li2025TwoStageLearnin,
  author    = {Li, Haoyu and Zhong, Xiangru and Hu, Bin and Zhang, Huan},
  title     = {Two-stage learning of stabilizing neural controllers via {Zubov} sampling and iterative domain expansion},
  booktitle = {Advances in Neural Information Processing Systems},
  year      = {2025},
  volume    = {38},
  pages     = {73980--74009},
  publisher = {Curran Associates, Inc.},
  doi       = {10.52202/085713-2480},
  note      = {\href{https://doi.org/10.52202/085713-2480}{doi:10.52202/085713-2480}},
}

@Article{Ames2017ControlBarrierF,
  author  = {Ames, Aaron D. and Xu, Xiangru and Grizzle, Jessy W. and Tabuada, Paulo},
  title   = {Control barrier function based quadratic programs for safety critical systems},
  journal = {IEEE Transactions on Automatic Control},
  year    = {2017},
  volume  = {62},
  number  = {8},
  pages   = {3861--3876},
  doi     = {10.1109/TAC.2016.2638961},
  note    = {\href{https://doi.org/10.1109/TAC.2016.2638961}{doi:10.1109/TAC.2016.2638961}},
}
\endgroup
\end{document}